\newtheorem{theorem}{Theorem}[section]
\newtheorem{corollary}[theorem]{Corollary}
\newtheorem{lemma}[theorem]{Lemma}
\newtheorem{definition}[theorem]{Definition}
\newtheorem{example}[theorem]{Example}
\newtheorem{remark}[theorem]{Remark}
\def\id{{\bf 1}\!\!{\rm I}}
\def\tr{{\rm Tr}}
\def\L{\Lambda}
\def\G{\Gamma}
\def\ca{{\mathcal A}}
\def\cm{{\mathcal M}}
\def\ch{{\mathcal H}}
\def\ck{{\mathcal K}}
\def\r{\rho}
\def\f{\varphi}
\def\bz{{\mathbb Z}}
\def\bp{{\mathbb P}}
\begin{document}

\begin{center}
{\Large {\bf Open Quantum Random Walks and Quantum Markov
chains on Trees II: The recurrence }}\\[1cm]

{\sc Farrukh Mukhamedov$^*$} \\[2mm]
 Department of Mathematical Sciences, College of Science, \\
 United Arab Emirates University 15551, Al-Ain,\\ United
Arab Emirates\, and \\
Institute of Mathematics named after V.I.Romanovski, 4,\\
University str., 100125, Tashkent, Uzbekistan\\
e-mail: {\tt far75m@gmail.com; farrukh.m@uaeu.ac.ae}\\[1cm]

{\sc Abdessatar Souissi$^*$} \\[2mm]
Department of Accounting, College of Business Management\\
Qassim University, Ar Rass, Saudi Arabia \, and \\
Preparatory institute for scientific and technical studies,\\
 Carthage University, Amilcar 1054, Tunisia\\
  e-mail: {{\tt a.souaissi@qu.edu.sa}; {\tt
 abdessattar.souissi@ipest.rnu.tn}}\\[1cm]

 {\sc Tarek Hamdi} \\[2mm]
Department of Management Information Systems, College of Business Management\\
Qassim University, Ar Rass, Saudi Arabia \, and \\
Laboratoire d'Analyse Math\'ematiques et applications LR11ES11 \\
Universit\'e de Tunis El-Manar, Tunisia\\
  e-mail: {{\tt t.hamdi@qu.edu.sa}}\\[1cm]

   {\sc Amen Allah Andolsi} \\[2mm]
  Nuclear Physics and High Energy Physics Research Unit, Faculty of Sciences of Tunis,  University of Tunis El Manar,  2092 Tunis, Tunisia\\
  e-mail: {{\tt amenallah.andolsi@fst.utm.tn}}\\[1cm]

\end{center}

$^*$ Corresponding authors

\small
\begin{center}
{\bf Abstract}\\
\end{center}
In the present paper, we construct QMC (Quantum Markov Chains)
associated with Open Quantum Random Walks such that the transition
operator of the chain is defined by OQRW and the restriction of QMC
to the commutative subalgebra coincides with the distribution
of OQRW. Furthermore, we first propose a new construction of QMC on trees, which is an extension of QMC considered in Ref. \cite{AOM}. Using such a construction, we are able to construct QMCs on tress associated with OQRW. Our investigation leads to the detection of the phase transition phenomena within the proposed scheme. This kind of phenomena appears first time in this direction. Moreover, mean entropies of QMCs are calculated.

\vskip 0.3cm \noindent {\it Mathematics Subject
           Classification}: 46L53, 46L60, 82B10, 81Q10.\\
        {\it Key words}:  Open quantum random walks; Quantum Markov chain; Cayley tree; recurrence;

\normalsize

\section{Introduction}

Motivated largely by the prospect of superefficient algorithms, the theory of quantum Markov
chains (QMC), especially in the guise of quantum walks, has generated a huge number of works,
including many discoveries of fundamental importance \cite{Ke,Kum,nielsen,portugal}.
In \cite{Fing} a
novel approach  has been proposed to investigate quantum cryptography problems by means of QMC, where quantum effects are entirely encoded into super-operators labelling transitions, and
the nodes of its transition graph carry only classical information and thus they are discrete. In \cite{AW87,DK19} QMC have been applied  to the investigations of so-called "open quantum random walks" (OQRW) \cite{attal,carbone,carbone2,konno,cfrr}.
OQRW are related to the study of asymptotic behavior of trace-preserving completely
positive maps, which belong to fundamental topics of quantum information theory (see, for instance
\cite{burgarth2,petulante,novotny}). These quantum walks are possible noncommutative generalizations of classical
Markov chains and have applications in quantum computing, quantum optics \cite{KKS19,MSK13}. We refer the reader to \cite{SP19}
for a recent survey on the subject.

Recently, in \cite{MSH22} we first have proposed a new construction of QMC on trees, which is an extension of QMC considered in \cite{AOM}. Using such a construction, QMCs are defined on tress associated with OQRW. The investigation led to the detection of the phase transition phenomena within the proposed scheme. Such kind of phenomena appeared for the first time in this direction. In the present paper, we continue the proposed investigation to discuss the recurrence problem for the associated QMC. In one dimensional setting the recurrence problem had been paid attention by many authors (see for example \cite{accardi1,accardi2,DK19,GLV20,cfrr}).

For the sake of clarity, let us recall some necessary information about OQRW.
 Let $\mathcal{K}$ denote a separable Hilbert space and let
$\{|i\rangle\}_{i\in \L}$ be its orthonormal basis indexed by the
vertices of some graph $\L$ (here the set $\L$ of vertices might
be finite or countable). Let $\mathcal{H}$ be another Hilbert space,
which will describe the degrees of freedom given at each point of
$\L$. Then we will consider the space
$\mathcal{H}\otimes\mathcal{K}$. For each pair $i,j$ one associates
a bounded linear operator $B_{j}^i$ on $\mathcal{H}$. This operator
describes the effect of passing from $|j\rangle$ to $|i\rangle$. We
will assume that for each $j$, one has
\begin{equation}\label{Bij}
\sum_i B_{j}^{i*}B_{j}^i=\id,
\end{equation} where, if infinite,
such series is strongly convergent. This constraint means: the sum
of all the effects leaving site $j$ is $\id$. The operators $B^i_j$
act on $\mathcal{H}$ only, we dilate them as operators on
$\mathcal{H}\otimes\mathcal{K}$ by putting
$$
M^i_j=B^i_j\otimes \vert i\rangle\langle j\vert\,.
$$
The operator $M^i_j$ encodes exactly the idea that while passing
from $\vert j\rangle$ to $\vert i\rangle$ on the lattice, the effect
is the operator $B^i_j$ on $\mathcal{H}$.

According to \cite{attal} one has
\begin{equation}\label{sumMij=1}
\sum_{i,j} {M^i_j}^* M^i_j=\id.
\end{equation}

Therefore, the operators $(M^i_j)_{i,j}$ define a completely
positive mapping
\begin{equation}\label{MM}
\cm(\r)=\sum_i\sum_j M^i_j\,\r\, {M^i_j}^*
\end{equation}
on $\ch\otimes\ck$.

In what follows, we consider density matrices on
$\mathcal{H}\otimes\mathcal{K}$ which take the form
\begin{equation}\label{rr}
\rho=\sum_i\rho_i\otimes |i\rangle\langle i|,
\end{equation}
 assuming that $\sum_i\tr(\rho_i)=1$.

For a given initial state of such form, the \textit{Open Quantum
Random Walk (OQRW)} is defined by the mapping $\cm$, which has the
following form
\begin{equation}\label{MM1}
\cm(\rho)=\sum_i\Big(\sum_j B_{j}^i\rho_j B_{j}^{i*}\Big)\otimes
|i\rangle\langle i|.
\end{equation}

By means of the map $\cm$ one defines a family of classical random
process on $\O=\L^{\bz_+}$. Namely, for any density operator $\r$ on
$\mathcal{H}\otimes\mathcal{K}$ (see \eqref{rr}) the probability
distribution is defined by
\begin{equation}\label{Prb}
\bp_\r(i_0,i_1,\dots,i_n)=\tr(B^{i_n}_{i_{n-1}}\cdots
B^{i_2}_{i_{1}}B^{i_1}_{i_{0}}\rho_{i_0}B^{i_1*}_{i_{0}}B^{i_2*}_{i_{1}}\cdots
B^{i_n*}_{i_{n-1}}).
\end{equation}
We point out that this distribution is not a Markov measure
\cite{BBP}.

On the other hand, it is well-known \cite{AW87,norris} that to each
classical random walk one can associate a certain Markov chain and
some properties of the walk can be explored by the constructed
chain. Recently, in \cite{DK19,DM19}, we have found a quantum Markov
chain \footnote{We note that a Quantum Markov Chain is a
quantum generalization of a Classical Markov Chain where the state
space is a Hilbert space, and the transition probability matrix of a
Markov chain is replaced by a transition amplitude matrix, which
describes the mathematical formalism of the discrete time evolution
of open quantum systems, see
\cite{Ac,[AcFr80]} for more details.}
$\f$ on the
algebra $\ca=\otimes_{i\in\bz_+}\ca_i$, where $\ca_i$ is isomorphic to
$B(\ch)\otimes B(\ck)$, $i\in\bz_+$, such that the transition operator
$P$ equals to the mapping
$\cm^*$\footnote{The dual of $\cm$ is defined by the equality
$\tr(\cm(\r)x)=\tr(\r\cm^*(x))$ for all density operators $\rho$ and
observables $x$.} and the restriction of $\f$ to the commutative
subalgebra of $\ca$ coincides with the distribution $\bp_\r$, i.e.
\begin{equation}\label{I1}
\f\big((\id\otimes|i_0><i_0|)\otimes\cdots\otimes(\id\otimes|i_n><i_n|)\big)=\bp_\r(i_0,i_1,\dots,i_n).
\end{equation}
Hence, this result allows us  to
interpret the distribution $\bp_\r$ as a QMC, and to study further
properties of $\bp_\r$.

In \cite{MSH22}, we have initiated to look at the probability distribution \eqref{Prb} as a Markov field over the Cayley tree $\G^k$ 
Roughly speaking, $(i_0,i_1,\dots,i_n)$ is considered as a configuration on $\Omega=\L^{\G^k}$. Such kind of consideration allows us to investigated a phase transition phenomena associated for OQRW within QMC scheme \cite{MBS161,MBSG20}.

We stress that, in physics, a spacial classes of QMC, called "Matrix Product States" (MPS) and more generally "Tensor Network States" \cite{CV,Or} were used to investigate
quantum phase transitions for several lattice models.
This method uses the density matrix renormalization group (DMRG) algorithm which opened a new way of performing
the renormalization  procedure in 1D systems and gave extraordinary precise results. This is done by keeping the states of subsystems which
are relevant to describe the whole wave-function, and not those that minimize the energy on
that subsystems \cite{[RoOs96]}.

In this paper, we propose to investigate the recurrence problem for QMC on trees, and apply it to the QMC associated with OQRW on trees. Notice that the mentioned problem has been investigated for discrete-time nearest-neighbor open quantum random walks on the integer line in \cite{CGL17}. However, in the present work, we focus on the recurrence problem associated with QMC, while in \cite{CGL17,GLV20,JL21} the recurrence has
been teated with respect to the probability distribution \eqref{Prb}.

\section{Preliminaries}\label{sec_prel}

Let $\Gamma^k_{+} = (V,E)$ be the semi-infinite Cayley tree of order $k$  with root  $o$. The Cayley tree of order $k$ is characterized by being a tree for which every vertex has exactly
 $k+1$ nearest-neighbors. Recall that, two vertices $x$ and $y$ are  {\it nearest neighbors} (denoted $x\sim y$ ) if they are joined through an edge (i.e. $<x,y>\in E$). A   list $ x\sim x_1\sim \dots \sim x_{d-1}\sim y$ of vertices is called a {\it
path} from $x$ to $y$. The distance on the tree between two vertices $x$ and $y$ (denoted $d(x,y)$) is the length of the shortest edge-path
joining them.

Define
\[W_n := \{x\in V \quad \mid\quad d(x,o) = n \}\]
\[ \Lambda_{n}: = \bigcup_{j\le n}W_j;\quad  \Lambda_{[m,n]} = \bigcup_{j=m}^{n}W_j.\]

Recall a coordinate structure in $\G^k_+$:  every vertex $x$
(except for $x^0$) of $\G^k_+$ has coordinates $(i_1,\dots,i_n)$,
here $i_m\in\{1,\dots,k\}$, $1\leq m\leq n$ and for the vertex
$x^0$ we put $(0)$.  Namely, the symbol $(0)$ constitutes level 0,
and the sites $(i_1,\dots,i_n)$ form level $n$ (i.e. $d(x^0,x)=n$)
of the lattice.  Using this structure, vertices
$x^{(1)}_{W_n},x^{(2)}_{W_n},\cdots,x^{(|W_n|)}_{W_n}$ of $W_n$
can be represented as follows:
\begin{eqnarray}\label{xw}
&&x^{(1)}_{W_n}=(1,1,\cdots,1,1), \quad x^{(2)}_{W_n}=(1,1,\cdots,1,2), \ \ \cdots \quad x^{(k)}_{W_n}=(1,1,\cdots,1,k,),\\
&&x^{(k+1)}_{W_n}=(1,1,\cdots,2,1), \quad
x^{(2)}_{W_n}=(1,1,\cdots,2,2), \ \ \cdots \quad
x^{(2k)}_{W_n}=(1,1,\cdots,2,k),\nonumber
\end{eqnarray}
\[\vdots\]
\begin{eqnarray*}
&&x^{(|W_n|-k+1)}_{W_n}=(k,k,,\cdots,k,1), \
x^{(|W_n|-k+2)}_{W_n}=(k,k,\cdots,k,2),\ \ \cdots
x^{|W_n|}_{W_n}=(k,k,\cdots,k,k).
\end{eqnarray*}
In the above notations, we write
$$
 W_n = \{ (i_1, i_2, \cdots, i_n); \quad i_j = 1,2, \cdots, k \}
$$
So one can see that $|W_n|=k^n$.
The set of \textit{direct successors}  for a given vertex $x\in V$  is defined  by
\begin{equation}\label{S(x)def}
S(x) :  = \left\{y\in V \, \,  : \, \,  x\sim y \, \, \hbox{and} \, \, d(y,o) > d(x,o) \right\}.
\end{equation}
The vertex $x$ has exactly $k$ direct successors denoted $(x,i), i=1,2,\cdots, k$
$$
S(x) = \{(x,1),\,  (x,2), \, \cdots, \, (x,k)\}.
$$

To each vertex $x$, we associate a C$^*$--algebra of observable $\mathcal{A}_x$ with identity $\id_x$. For a given bounded region $V'\subset V$, we consider the algebra $\mathcal{A}_{V'} = \bigotimes_{x\in V'}\mathcal{A}_x$. We have the  the following natural embedding
$$
\mathcal{A}_{ \Lambda_{n }}\equiv  \mathcal{A}_{ \Lambda_{n }}\otimes\id_{ W_{n+1}}\subset \mathcal{A}_{ \Lambda_{n+1 }}.
$$
The algebra $\mathcal{A}_{ \Lambda_{n}}$ is then a subalgebra of $\mathcal{A}_{ \Lambda_{n+1}}$. It follows the local algebra
\begin{equation}\label{AVloc}
  \mathcal{A}_{V;\, loc} := \bigcup_{n\in\mathbb{N}}\mathcal{A}_{\Lambda_{n}}
\end{equation}
and the quasi-local algebra
$$
\mathcal{A}_V := \overline{\mathcal{A}_{V;\, loc}}^{C^*}
$$
The set of states on a C$^*$--algebra $\mathcal{A}$ will be denoted $\mathcal{S}(\mathcal{A})$.

There are  $k$ natural shifts on the Cayley tree  of order $k$: for each  $x = (i_1, i_2, \cdots, i_n) \in  \Lambda_n $ and $j\in\{1, \dots, k\}$
\begin{equation}\label{shifts}
  \alpha_j(x) = (j, x)= (j, i_1, i_2, \cdots, i_n)\in  \Lambda_{n+1}.
\end{equation}
 Let $g = (j_1, j_2, \cdots, j_N)\in V$ one defines
$$
\alpha_g(x) := \alpha_{j_1}\circ \alpha_{j_2}\circ\cdots\circ \alpha_{j_N}(x) = (j_1, j_2, \cdots, j_N, i_1, i_2,\cdots, i_n).
$$
The $\alpha_j$'s action on the algebra $\mathcal{A}_V$  is given as follows:
\begin{equation}\label{shift_algebra}
\alpha_j\left( \bigotimes_{ x\in  \Lambda_{\le n}}a_x \right):=
 \id^{(o)}\otimes\bigotimes_{x\in\Lambda_{\le n}} a_x^{(j,x)}.
\end{equation}
The shift $\alpha_j$ induces a  $*$-isomorphism from $\mathcal{A}_V$ into $\mathcal{A}_{V_{(o,j)}}$. Let  $\alpha_j^{-1}$ its inverse  isomorphism. For $g\in V$, the map $\alpha_g$ defines a  $*$-isomporphism from $\mathcal{A}_V$ into $\mathcal{A}_{V_g}$ and its inverse isomorphism will be denoted by $\alpha_g^{-1}$.

Consider a triplet $\mathcal C\subseteq\mathcal B\subseteq \mathcal A$ of C$^\ast$--algebras. A \textit{quasi-conditional expectation} \cite{ACe} is a completely positive identity preserving linear map $E :\mathcal A \to \mathcal B$ such that
$
E(ca) = cE(a)$, for all $a\in\mathcal A$, $c\in \mathcal C.
$
\begin{definition}\cite{ACe}
Let $\mathcal{B}\subseteq \mathcal{A}$ be two unitary C$^*$--algebra $\id$. A Markov transition expectation from $\mathcal{A}$ into $\mathcal{B}$ is a completely positive identity preserving map.
\end{definition}

 \begin{definition}\label{QMCdef}\cite{[AcFiMu07],[AcSouElG20]} A {\it (backward) quantum Markov chain}  on $\mathcal{A}_V$
 is a triplet $(\phi_o, (E_{\Lambda_{n}})_{n\ge 0},  (h_{n})_n)$
of initial state $\phi_o\in \mathcal{S}(\mathcal{A}_o)$, a sequence of quasi-conditional expectations $(E_{ \Lambda_{n }})_n$ w.r.t.
the triple $\mathcal{A}_{{\Lambda}_{n-1 }}\subseteq \mathcal{A}_{ \Lambda_{n }}\subseteq\mathcal{A}_{ \Lambda_{n+1}}$ and
a sequence $h_{n}\in\mathcal{A}_{W_n, +}$ of boundary conditions such that for each $a\in \mathcal{A}_V$  the limit
\begin{equation}\label{lim_Mc}
\varphi(a): = \lim_{n\to\infty} \phi_0\circ E_{ \Lambda_{0}}\circ
E_{ \Lambda_{1}} \circ \cdots \circ E_{ \Lambda_{n}}(h_{n+1}^{1/2}ah_{n+1}^{1/2})
\end{equation}
exists in the weak-*-topology and defines a state. In this case the state $\varphi$ defined by \eqref{lim_Mc}
is also called \textit{quantum Markov chain (QMC)}.
\end{definition}

A QMC $\varphi$ on $\mathcal{A}_{V}$ is said to be \textit{tree-homogeneous} if
 \begin{equation}\label{trans_inv_state}
   \varphi\circ \alpha_j = \varphi
 \end{equation}
for every $j \in \{1,2,\cdots, k\}$.

In the sequel, we restrict ourselves to the case of trivial boundary condition $h = \id$ and the associated tree-homogeneous quantum  Markov chain $\varphi$ is determined by the pair $\varphi\equiv (\phi_o, \mathcal{E})\equiv (\phi_o, \mathcal{E}, h= \id).$ \footnote{The existence of other boundary conditions leads to the problem of a phase transition within QMC scheme which was considered in \cite{MSH22,MBS161}.}
Here, $\mathcal{E}$ is a Markov transition expectation from $\mathcal{A}_{(o)}\otimes \mathcal{A}_{(1)}\otimes\cdots\otimes\mathcal{A}_{(k)}$ into $\mathcal{A}_{(o)}$. For each $u$ by $\mathcal{E}_u$ we denote the $\alpha_u$-shift of $\mathcal{E}$ given by
\begin{equation}\label{Eu=alphau(E)}
    \mathcal{E}_u = \alpha_{u}\circ\mathcal{E}\circ\alpha_{u}^{-1}
\end{equation}
Clearly, $\mathcal{E}_u$ is a transition expectation from $\mathcal{A}_{u}\otimes \mathcal{A}_{(u,1)}\otimes\cdots\otimes\mathcal{A}_{(u,k)}$ into $\mathcal{A}_u$.
For each $n\in\mathbb{N}$, we consider
$$
\mathcal{E}_{W_n}:= \bigotimes_{u\in W_n}\mathcal{E}_u
$$
One can see that $\mathcal{E}_{W_n}$ is a Markov transition expectation from $\mathcal{A}_{\Lambda_{[n,n+1]}}$ into $\mathcal{A}_{W_n}$. Following \cite{AF03,MS19}, we have the next result.

\begin{theorem}
Let $\varphi= (\phi_o, \mathcal{E})$ be a tree-homogeneous quantum Markov chain. There exists a unique conditional expectation $E_{o]}$ from $\mathcal{A}_V$ into $\mathcal{A}_{o}$ characterized by
\begin{equation}\label{Eo}
  E_{o]}(a) =   \mathcal{E}_{o}\left(a_{o}\otimes\mathcal{E}_{W_1}\left(a_{W_1}\cdots \otimes\mathcal{E}_{W_n}\left(a_{{W}_n}\otimes h_{n+1}\right)\right)\right)
\end{equation}
for all $a= a_o\otimes a_{W_1}\otimes\cdots\otimes a_{W_n}$. Moreover, one has
\begin{equation}\label{phi=phiooEo}
    \varphi(\cdot ) = \phi_o\circ E_{o]}( . )
\end{equation}
\end{theorem}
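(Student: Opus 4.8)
The plan is to construct the conditional expectation $E_{o]}$ as a weak-$*$ limit of finite-range conditional expectations obtained by composing the shifted transition expectations $\mathcal{E}_u$, and then to verify that the state $\varphi$ from Definition \ref{QMCdef} factors as $\phi_o\circ E_{o]}$. First I would define, for each $n$, the map
\begin{equation*}
E_{o],n}:\mathcal{A}_{\Lambda_n}\to\mathcal{A}_o,\qquad
E_{o],n}(a):=\mathcal{E}_o\Bigl(a_o\otimes\mathcal{E}_{W_1}\bigl(a_{W_1}\otimes\cdots\otimes\mathcal{E}_{W_{n-1}}(a_{W_{n-1}}\otimes a_{W_n})\bigr)\Bigr),
\end{equation*}
where one reads the nested expression from the innermost slot outward, using that $\mathcal{E}_{W_j}$ maps $\mathcal{A}_{\Lambda_{[j,j+1]}}$ into $\mathcal{A}_{W_j}$, so that each successive application lands in the algebra on which the next $\mathcal{E}_{W_{j-1}}$ acts. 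Each $E_{o],n}$ is completely positive and unital because it is a composition of tensor products of the Markov transition expectations $\mathcal{E}_u$, each of which is CP and unital by construction. The key consistency point is that, because $h=\id$ and each $\mathcal{E}_u$ is identity-preserving, $E_{o],n+1}$ restricted to $\mathcal{A}_{\Lambda_n}\otimes\id_{W_{n+1}}$ equals $E_{o],n}$; indeed feeding $\id_{W_{n+1}}$ into the innermost slot and applying $\mathcal{E}_{W_n}$ returns $\id_{W_n}$, collapsing the extra layer. This compatibility lets me define $E_{o]}$ on the local algebra $\mathcal{A}_{V;\,loc}$ unambiguously, and since each $E_{o],n}$ is unital and completely positive it is norm-contractive, so $E_{o]}$ extends uniquely by continuity to the quasi-local algebra $\mathcal{A}_V$; the extension is again CP and unital, hence a conditional expectation onto $\mathcal{A}_o$, and the $\mathcal{A}_o$-bimodule property is inherited from the corresponding property of $\mathcal{E}_o$ in its first slot. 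Formula \eqref{Eo} is then just the definition evaluated on an elementary tensor (with $h_{n+1}=\id$).

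For the identity $\varphi(\cdot)=\phi_o\circ E_{o]}(\cdot)$, I would unwind the defining limit \eqref{lim_Mc}. With trivial boundary conditions the quasi-conditional expectation $E_{\Lambda_n}$ in Definition \ref{QMCdef}, acting in the triple $\mathcal{A}_{\Lambda_{n-1}}\subseteq\mathcal{A}_{\Lambda_n}\subseteq\mathcal{A}_{\Lambda_{n+1}}$, is precisely $\mathcal{E}_{W_n}$ (tensored with the identity on the complementary algebras), since $\mathcal{E}_{W_n}=\bigotimes_{u\in W_n}\mathcal{E}_u$ is a transition expectation from $\mathcal{A}_{\Lambda_{[n,n+1]}}$ into $\mathcal{A}_{W_n}$ and leaves $\mathcal{A}_{\Lambda_{n-1}}$ pointwise fixed. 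Hence for $a\in\mathcal{A}_{\Lambda_m}$ and $n\ge m$ the finite composition $\phi_o\circ E_{\Lambda_0}\circ\cdots\circ E_{\Lambda_n}(a)$ telescopes — the factors $E_{\Lambda_{m}},\dots,E_{\Lambda_n}$ act as the identity on $a$ because $a$ carries only identities on $W_{m+1},\dots$ — and reduces exactly to $\phi_o\circ E_{o],m}(a)=\phi_o\circ E_{o]}(a)$. Taking $m\to\infty$ and using density of $\mathcal{A}_{V;\,loc}$ together with the uniform contractivity of all maps involved shows the weak-$*$ limit in \eqref{lim_Mc} exists, equals $\phi_o\circ E_{o]}$, and defines a state; this simultaneously re-derives well-definedness of $\varphi$ and establishes \eqref{phi=phiooEo}.

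Uniqueness of $E_{o]}$ is immediate once one observes that \eqref{Eo} prescribes its value on every elementary tensor in $\mathcal{A}_{V;\,loc}$, a dense $*$-subalgebra of $\mathcal{A}_V$, and a bounded linear map is determined by its values there. The step I expect to require the most care is the compatibility/collapsing argument $E_{o],n+1}|_{\mathcal{A}_{\Lambda_n}}=E_{o],n}$: one must track precisely how the nested $\mathcal{E}_{W_j}$'s interact with the embeddings $\mathcal{A}_{\Lambda_n}\hookrightarrow\mathcal{A}_{\Lambda_{n+1}}$, and verify that the identity-preserving property of $\mathcal{E}_u$ propagates through the tensor product $\mathcal{E}_{W_n}=\bigotimes_{u\in W_n}\mathcal{E}_u$ on the tree — this is where the branching structure (each $u\in W_n$ having $k$ successors in $W_{n+1}$) enters and where a one-dimensional intuition could mislead. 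Everything else is a routine application of the standard fact that a pointwise-defined, uniformly contractive family on a dense subalgebra extends to the closure, exactly as in \cite{AF03,MS19}.
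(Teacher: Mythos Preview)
Your proposal is correct and follows exactly the standard construction one finds in the references the paper cites. In fact, the paper does not give its own proof of this theorem at all: the statement is introduced with ``Following \cite{AF03,MS19}, we have the next result'' and is left unproved, so your sketch supplies precisely the argument the paper defers to those references---defining the finite-volume maps $E_{o],n}$, checking consistency via unitality of $\mathcal{E}_{W_n}$, extending by density, and then identifying $\varphi$ with $\phi_o\circ E_{o]}$ by telescoping the defining limit \eqref{lim_Mc}.

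One small caution: you claim the $\mathcal{A}_o$-bimodule property of $E_{o]}$ is ``inherited from the corresponding property of $\mathcal{E}_o$ in its first slot,'' but in the paper's framework $\mathcal{E}_o$ is only assumed to be a \emph{Markov transition expectation} (Definition following \cite{ACe}), i.e.\ completely positive and identity-preserving, not necessarily an $\mathcal{A}_o$-bimodule map. The paper's use of the term ``conditional expectation'' for $E_{o]}$ is therefore somewhat loose, and what is actually needed (and what you correctly establish) is that $E_{o]}$ is a well-defined, unital, completely positive map from $\mathcal{A}_V$ onto $\mathcal{A}_o$; that is all the rest of the paper uses.
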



The forward Markov operator associated with $\mathcal{E}_u$ is given by:
\begin{equation}\label{Tu}
    T_u(a) = \mathcal{E}_u(a\otimes \id_{S(u)}), \quad    a\in \mathcal{A}_u
\end{equation}
While, there are  $k$ backward Markov operators corresponding to the successors $(u,\ell),\; j=1,\dots, k$  of $u$,
\begin{equation}\label{Pu}
    P_{u}^{(u,\ell)}(a) = \mathcal{E}_{u}(\id^{(u)}\otimes a\otimes \id_{S(u)\setminus\{(u,\ell)\}}),\quad \forall a\in \mathcal{A}_{(u,\ell)}
\end{equation}
For any ray $r=(u_n)_n$, one defines
\begin{equation}\label{Px0xm}
P_{u_n}^{u_{n+m}} = P_{u_n}^{u_{n+1}}\circ\cdots \circ P_{u_{n+m-1}}^{u_{n+m}}; \quad m,n\in\mathbb{N}
\end{equation}

The map $P_{u_n}^{u_m}$ defines a Markov operator from $\mathcal{A}_{u_m}$ into $\mathcal{A}_{u_n}$.

\section{Recurrence of quantum Markov chains on trees}

This section  is devoted to the notions of recurrence and weak recurrence for quantum Markov chains on trees.

Following \cite{accardi1,Sou22} a given projection $e\in Proj(\mathcal{A})$ and a ray $r=(u_n)_n\in Paths(o,\infty)$, a stopping time $\tau_{e;r} = (\tau_{u_n})_n$ on the algebra $\mathcal{A}_V$, is defined as follows:
\begin{eqnarray}
\tau_{e;o} &=& e^{(o)}\otimes \id_{V\setminus \{o\}} \nonumber\\
\tau_{e;u_1} &=& {e^\perp}^{(o)}\otimes e^{(u_1)}\otimes  \id_{V\setminus \{u_1]\}} \nonumber\\
\vdots&& \nonumber\\
\label{tauxn}\tau_{e;u_n} &=& {e^\perp}^{(o)}\otimes \cdots \otimes {e^{\perp}}^{(u_{n-1})}\otimes e^{(u_n)}\otimes\id_{V\setminus \{x_{n]}\}}
\end{eqnarray}
\begin{equation}\label{tauxinfty}
\tau_{e;u_n;\infty}:= {e^\perp}^{(o)}\otimes {e^{\perp}}^{(u_1)}\otimes \cdots\otimes {e^{\perp}}^{(u_{n-1})} \otimes {e^{\perp}}^{(u_n)}\otimes \id_{V\setminus \{u_n]},
\end{equation}
where for each $a\in\mathcal{A}$ one has $a^{(u)} = \alpha_u(a)$.
Put
$$
\tau_{e;r;\infty} =\lim_{n\to\infty}\tau_{e;u_n; \infty} =   \bigotimes_{n\in\mathbb{N}}{e^{\perp}}^{(u_{n})}
$$
\begin{definition} Let $\varphi = (\phi_o, \mathcal{E})$ be a tree-homogeneous quantum Markov chain. A projection $e\in Proj(\mathcal{A})$ is said to be
\begin{description}
\item[(i)] $\mathcal{E}$--completely accessible if
\begin{equation}
  E_{o]}(\tau_{e;r;\infty} ) : = \lim_{n\to\infty} E_{o]}(\tau_{e;x_n; \infty}) = 0
\end{equation}
for every ray $r=({x_n})_n$.
\item[(ii)] $\varphi$-completely accessible if $\varphi(\tau_{e; r;\infty})= 0$, for every ray $r=({x_n})_n$.
\item[(iii)] $\mathcal{E}$-recurrent if $0 <\mathrm{Tr}(\mathcal{E}(e\otimes\id))< \infty$ and one has
\begin{equation}
    \frac{1}{\mathrm{Tr}(\mathcal{E}(e\otimes\id))}\mathrm{Tr}\left(E_{o]}(\sum_{n\ge 0}e\otimes\tau_{e;x_n} \right) = 1
\end{equation}
for every ray $r=({x_n})_n$.
\item[(iv)] $\varphi$-recurrent if $\varphi(\alpha_o(e)) \ne 0$ and
\begin{equation}
    \frac{1}{\varphi(\alpha_o(e))}\varphi\left(\sum_{n}e\otimes\tau_{e; x_n}\right) = 1
\end{equation}
for every ray $r=({x_n})_n$.
\end{description}
\end{definition}
\begin{definition} Let $\varphi = (\phi_o, \mathcal{E})$ be a tree-homogeneous quantum Markov chain.
 Let $e,f\in Proj(\mathcal{A}), e,f\ne 0$. The projection $f$ is
 \begin{itemize}
\item[(i)] $\mathcal{E}$--accessible from $e$ (and we write $e\to^{\mathcal{E}} f$) if for any ray $r= (x_n)_n$ there exists $m\in\mathbb{N}$ such that
$$
E_{o]}\left(\alpha_0(e)\alpha_{x_m}(f)\right) \ne 0
 $$
 \item[(ii)] $\varphi$-accessible from $e$ (we denote it as $e\to^{\varphi} f$ if for any ray $r = (x_n)_n$ there exists  $m\in\mathbb{N}$ such that
 $$
 \varphi\left(\alpha_0(e)\alpha_{x_m}(f)\right) \ne 0
 $$
 \end{itemize}
 \end{definition}

\begin{lemma}\label{lem_sum}
In the above notations:
\begin{equation}
    \sum_{n\ge 0}\tau_{e;x_n} = \id_{\mathcal{A}_V} - \tau_{e;r;\infty}
\end{equation}
\end{lemma}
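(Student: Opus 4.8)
The plan is to prove the identity by telescoping the partial sums; the only algebraic ingredient is that $e$ is a projection, so that at every site $u$ one has $e^{(u)}+{e^{\perp}}^{(u)}=\id^{(u)}$ (recall ${e^{\perp}}=\id-e$).

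First I would establish, by induction on $N\ge 0$, the finite-level identity
\begin{equation*}
\sum_{n=0}^{N}\tau_{e;x_n}=\id_{\mathcal{A}_V}-\tau_{e;x_N;\infty}.
\end{equation*}
For $N=0$ this reads $e^{(o)}\otimes\id_{V\setminus\{o\}}=\id-{e^{\perp}}^{(o)}\otimes\id_{V\setminus\{o\}}$, which holds because $e^{(o)}+{e^{\perp}}^{(o)}=\id^{(o)}$. For the inductive step, observe from \eqref{tauxn}--\eqref{tauxinfty} that $\tau_{e;x_N;\infty}$ and $\tau_{e;x_{N+1}}$ carry the same factor ${e^{\perp}}$ at each of the sites $o,x_1,\dots,x_N$ and differ only at $x_{N+1}$, where the former has $\id^{(x_{N+1})}$ and the latter has $e^{(x_{N+1})}$; hence
\begin{equation*}
\tau_{e;x_N;\infty}-\tau_{e;x_{N+1}}={e^{\perp}}^{(o)}\otimes\cdots\otimes{e^{\perp}}^{(x_N)}\otimes\bigl(\id^{(x_{N+1})}-e^{(x_{N+1})}\bigr)\otimes\id=\tau_{e;x_{N+1};\infty}.
\end{equation*}
Adding $\tau_{e;x_{N+1}}$ to the induction hypothesis then yields $\sum_{n=0}^{N+1}\tau_{e;x_n}=\id-\tau_{e;x_N;\infty}+\tau_{e;x_{N+1}}=\id-\tau_{e;x_{N+1};\infty}$, closing the induction.

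Finally I would let $N\to\infty$. The elements $\{\tau_{e;x_n}\}_{n\ge 0}$ are pairwise orthogonal projections: for $m<n$, at the site $x_m$ the element $\tau_{e;x_m}$ carries $e$ while $\tau_{e;x_n}$ carries ${e^{\perp}}$, so $\tau_{e;x_m}\tau_{e;x_n}=0$. Consequently the partial sums $\sum_{n=0}^{N}\tau_{e;x_n}$ form an increasing sequence of projections bounded by $\id$, and equivalently $\id-\tau_{e;x_N;\infty}$ is increasing; since $\tau_{e;x_N;\infty}\to\tau_{e;r;\infty}=\bigotimes_n{e^{\perp}}^{(x_n)}$ in the weak-$*$ topology (the very topology in which $\tau_{e;r;\infty}$ was defined), passing to the limit in the finite-level identity gives $\sum_{n\ge 0}\tau_{e;x_n}=\id_{\mathcal{A}_V}-\tau_{e;r;\infty}$.

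There is no genuine difficulty here; the only points needing care are the bookkeeping of which tensor factor carries $e$, ${e^{\perp}}$, or $\id$ in each $\tau_{e;x_n}$ and $\tau_{e;x_n;\infty}$, and the routine verification that the infinite sum converges in exactly the topology used to define the infinite tensor product $\tau_{e;r;\infty}$.
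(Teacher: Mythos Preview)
Your argument is correct: the telescoping identity $\tau_{e;x_N;\infty}-\tau_{e;x_{N+1}}=\tau_{e;x_{N+1};\infty}$ together with the base case gives the finite-level formula, and passing to the limit is justified exactly as you say. Note that the paper does not supply its own proof of this lemma but simply refers to \cite{Sou22}; your self-contained telescoping proof is the standard one and is presumably what that reference contains.
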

\proof see \cite{Sou22}
\begin{theorem}
Let $\varphi\equiv(\phi_o, \mathcal{E})$ be a tree-homogeneous quantum Markov chain on $\mathcal{A}_V$. Let $e\in Proj(\mathcal{A}_V)$ be a projection
\begin{description}
\item[(i)] $e$ is $\mathcal{E}$-recurrent if and only if for any ray $r= (x_n)_n$ one has
\begin{equation}\label{E_recurrence_eq}
  \mathcal{E}(e\otimes  E_{o]}(\tau_{e; r;\infty})) = 0
\end{equation}
\item[(ii)] $e$ is $\varphi$-recurrent if and only if for any ray $r= (x_n)_n$ one has
\begin{equation}\label{phi_recurr_eq}
    \varphi(e\otimes \tau_{e; r;\infty}) = 0
\end{equation}
\item[(iii)] $e$ is $\mathcal{E}$--accessible from $f$ if and only if for any ray $r= (x_n)_n$ there exists $m\in\mathbb{N}$ such that
\begin{equation}\label{E-recu-eq}
    \mathcal{E}(e\otimes P_{x_1}^{x_m}T_{x_m}f)\ne 0
\end{equation}
\item[(iv)] $e$ is $\varphi$--accessible from $f$ if and only if for any ray $r= (x_n)_n$ there exists $m\in\mathbb{N}$ such that
\begin{equation}
    \varphi(e\otimes P_{x_1}^{x_m}T_{x_m}f)\ne 0
\end{equation}
\end{description}
\end{theorem}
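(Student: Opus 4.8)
The plan is to derive (i) and (ii) from Lemma~\ref{lem_sum} together with the defining recursion \eqref{Eo} for $E_{o]}$, and to obtain (iii) and (iv) by unfolding the nested transition expectation level by level along the ray. The one structural fact used everywhere is that each $\mathcal E_{W_j}$ is identity preserving, so that \eqref{Eo} evaluated on an observable supported on finitely many vertices collapses to a computation carried only on those vertices and their ancestors. For (i): by Lemma~\ref{lem_sum} the series $\sum_{n\ge 0}e\otimes\tau_{e;x_n}$ equals the single operator $e\otimes(\id_{\mathcal A_V}-\tau_{e;r;\infty})$, so linearity of $E_{o]}$ and of $\mathrm{Tr}$ give $\mathrm{Tr}\big(E_{o]}(\sum_{n\ge 0}e\otimes\tau_{e;x_n})\big)=\mathrm{Tr}\big(E_{o]}(e\otimes\id)\big)-\mathrm{Tr}\big(E_{o]}(e\otimes\tau_{e;r;\infty})\big)$, the second trace being finite because $0\le E_{o]}(e\otimes\tau_{e;r;\infty})\le E_{o]}(e\otimes\id)$. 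A short computation with \eqref{Eo} at trivial boundary condition --- identity preservation of every $\mathcal E_{W_j}$ collapsing all levels above the root --- yields $\mathrm{Tr}(E_{o]}(e\otimes\id))=\mathrm{Tr}(\mathcal E(e\otimes\id))$. Hence the normalised quantity in the definition of $\mathcal E$-recurrence equals $1$ if and only if $\mathrm{Tr}(E_{o]}(e\otimes\tau_{e;r;\infty}))=0$; since $E_{o]}$ is completely positive and $\tau_{e;r;\infty}\ge 0$, the operator $E_{o]}(e\otimes\tau_{e;r;\infty})$ is positive and trace-class, so its trace vanishes exactly when it is $0$. Finally, applying \eqref{Eo} once more --- conditioning at the root before conditioning on the subtrees rooted at $S(o)$ --- rewrites $E_{o]}(e\otimes\tau_{e;r;\infty})$ as $\mathcal E\big(e\otimes E_{o]}(\tau_{e;r;\infty})\big)$, which is \eqref{E_recurrence_eq}.

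For (ii) the same manipulation is applied to the state $\varphi$ instead of $E_{o]}$: from the operator identity $\sum_{n\ge 0}e\otimes\tau_{e;x_n}=e\otimes(\id-\tau_{e;r;\infty})$ and $e\otimes\id_{V\setminus\{o\}}=\alpha_o(e)$ one gets $\varphi\big(\sum_{n\ge 0}e\otimes\tau_{e;x_n}\big)=\varphi(\alpha_o(e))-\varphi(e\otimes\tau_{e;r;\infty})$, so that $\frac{1}{\varphi(\alpha_o(e))}\varphi\big(\sum_{n\ge 0}e\otimes\tau_{e;x_n}\big)=1$ is literally equivalent to $\varphi(e\otimes\tau_{e;r;\infty})=0$, i.e.\ \eqref{phi_recurr_eq}. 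Here no positivity or trace argument is needed, only linearity of $\varphi$.

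For (iii): fix a ray $r=(x_n)_n$ with $x_j\in W_j$ and $m\in\mathbb{N}$, write $\alpha_o(e)\,\alpha_{x_m}(f)=e^{(o)}\otimes f^{(x_m)}\otimes\id$, and evaluate it with \eqref{Eo} for $n=m$ and $h=\id$, taking $a_o=e$, $a_{W_j}=\id_{W_j}$ for $1\le j<m$, and $a_{W_m}=f^{(x_m)}\otimes\id_{W_m\setminus\{x_m\}}$. Unfolding from the innermost expectation, identity preservation annihilates every factor $\mathcal E_u$ with $u$ off the ray; at the top level $\mathcal E_{x_m}(f^{(x_m)}\otimes\id_{S(x_m)})=T_{x_m}f$ by \eqref{Tu}; and at each subsequent level $\mathcal E_{x_{j-1}}\big(\id^{(x_{j-1})}\otimes(\cdots)^{(x_{j})}\otimes\id_{S(x_{j-1})\setminus\{x_{j}\}}\big)=P_{x_{j-1}}^{x_{j}}(\cdots)$ by \eqref{Pu}. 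Composing these backward operators through \eqref{Px0xm} and applying the last expectation at the root gives the identity $E_{o]}\big(\alpha_o(e)\,\alpha_{x_m}(f)\big)=\mathcal E\big(e\otimes P_{x_1}^{x_m}T_{x_m}f\big)$; hence the two non-vanishing conditions coincide for each ray and each $m$, and quantifying over rays and over $m$ as in the definition of $\mathcal E$-accessibility yields \eqref{E-recu-eq}.

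Part (iv) follows immediately by composing this last identity with the initial state $\phi_o$ and using $\varphi=\phi_o\circ E_{o]}$ from \eqref{phi=phiooEo}. The step I expect to be the main obstacle is the level-by-level bookkeeping in (iii): one must verify that after applying $\mathcal E_{W_m},\mathcal E_{W_{m-1}},\dots$ the partial result at stage $j$ is an element of $\mathcal A_{W_j}$ supported on the single ray-vertex $x_j$ with identities on all other vertices of $W_j$, so that the next expectation $\mathcal E_{x_{j-1}}$ sees exactly the pattern $\id\otimes(\,\cdot\,\text{ at }x_j)\otimes\id_{\text{rest}}$ required to match the definition \eqref{Pu} of the backward operators, while every off-ray branch contributes only identities by identity preservation. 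A secondary, minor point --- already implicit in the paper's definitions through the limit $E_{o]}(\tau_{e;r;\infty}):=\lim_n E_{o]}(\tau_{e;x_n;\infty})$ --- is the weak-$*$ continuity needed to identify $E_{o]}$ and $\varphi$ of the infinite tensor product $\tau_{e;r;\infty}$ with the limits of their values on $\tau_{e;x_N;\infty}$; this is disposed of once via the decreasing sequence of projections $\tau_{e;x_N;\infty}\downarrow\tau_{e;r;\infty}$ and normality of the trace.
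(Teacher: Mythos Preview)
Your proposal is correct and follows essentially the same route as the paper: Lemma~\ref{lem_sum} for (i)--(ii), the level-by-level collapse of $E_{o]}$ along the ray via \eqref{Tu}--\eqref{Pu} for (iii), and composition with $\phi_o$ through \eqref{phi=phiooEo} for (iv). Your treatment of (i) is in fact more complete than the paper's, which merely writes ``This leads to (i) and (ii)'' after invoking Lemma~\ref{lem_sum}; you supply the missing links $\mathrm{Tr}\,E_{o]}(e\otimes\id)=\mathrm{Tr}\,\mathcal E(e\otimes\id)$, the positivity argument converting vanishing trace to vanishing operator, and the recursive identity $E_{o]}(e\otimes\tau_{e;r;\infty})=\mathcal E\big(e\otimes E_{o]}(\tau_{e;r;\infty})\big)$, all of which are needed but left implicit in the paper.
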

\begin{proof}
From Lemma \ref{lem_sum} one has
$$
\sum_{n\ge 0}e\otimes\tau_{x_n} = e\otimes\id - e\otimes\tau_{e;n;\infty}
$$
This leads to (i) and (ii).

One has
$$
\mathcal{E}_{W_n}\left(f^{(x_m)}\otimes \id\right)
= \mathcal{E}_{x_m}(f^{(x_m)}\otimes\id)  = T_{x_m}f
$$
and
$$
\mathcal{E}_{W_m}(\id_{W_{m-1}}\otimes (T_{x_n}f)^{(x_{m-1})}) = P_{x_{m-1}}^{x_m}T_{x_m}f
$$

\begin{eqnarray*}
E_{o]}\left(\alpha_0(e)\alpha_{x_m}(f)\right) &=&  \mathcal{E}_{W_0}(e\otimes\mathcal{E}_{W_1}(\id_{W_1}\otimes\cdots\otimes \mathcal{E}_{W_m}(\id_{W_{m-1}}\otimes\mathcal{E}_{W_m}(f^{(x_m)}\otimes\id_{W_m+1})\\
&=& \mathcal{E}_{W_0}(e\otimes\mathcal{E}_{W_1}(\id_{W_1}\otimes\cdots\otimes \mathcal{E}_{W_{m-2}}(\id_{W_{m-2}}\otimes (P_{x_{m-1}}^{x_m}T_{x_m}f)^{(x_{m-1})}\\
&&\vdots\\
&=& \mathcal{E}(e\otimes P_{x_1}^{x_m}T_{x_m}f)
\end{eqnarray*}
This proves (iii) and using (\ref{phi=phiooEo}) one gets (iv).
\end{proof}
\begin{corollary}
Let  $\varphi\equiv (\phi_o, \mathcal{E})$  be a tree-homogeneous quantum Markov chain. Any  $\mathcal{E}$-recurrence  projection  is $\varphi$-recurrent. Conversely, if the initial state $\phi_o$ is faithful then   Any  $\varphi$-recurrence  projection  is $\mathcal{E}$-recurrent.
\end{corollary}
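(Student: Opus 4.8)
The plan is to deduce the corollary directly from the characterizations of recurrence obtained in the preceding theorem, namely the equivalences
$$
e \text{ is } \mathcal{E}\text{-recurrent} \iff \mathcal{E}(e\otimes E_{o]}(\tau_{e;r;\infty})) = 0 \quad\text{for every ray } r,
$$
$$
e \text{ is } \varphi\text{-recurrent} \iff \varphi(e\otimes \tau_{e;r;\infty}) = 0 \quad\text{for every ray } r,
$$
together with the relation $\varphi(\cdot) = \phi_o\circ E_{o]}(\cdot)$ from \eqref{phi=phiooEo} and the defining property \eqref{Eo} of $E_{o]}$, which shows that $E_{o]}(e\otimes \tau_{e;r;\infty}) = \mathcal{E}_o(e\otimes E_{o]}(\tau_{e;r;\infty}))$ in the relevant sense (applying $E_{o]}$ to an observable localized on $\{o\}\cup\{u_n\}_n$ collapses, after the innermost transition expectations act on the tail, to one application of $\mathcal{E}$ to $e$ tensored with the conditional expectation of the tail).

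First I would prove the forward implication. Suppose $e$ is $\mathcal{E}$-recurrent. By part (i) of the theorem, for every ray $r$ one has $\mathcal{E}(e\otimes E_{o]}(\tau_{e;r;\infty})) = 0$. Applying $\phi_o$ (which is a state, hence positive) and using \eqref{phi=phiooEo} together with the factorization of $E_{o]}$ through $\mathcal{E}$ described above, we obtain $\varphi(e\otimes\tau_{e;r;\infty}) = \phi_o\big(\mathcal{E}(e\otimes E_{o]}(\tau_{e;r;\infty}))\big) = \phi_o(0) = 0$ for every ray $r$. By part (ii) of the theorem this is exactly $\varphi$-recurrence of $e$; one also needs to check the nondegeneracy condition $\varphi(\alpha_o(e))\ne 0$, which follows from $0 < \mathrm{Tr}(\mathcal{E}(e\otimes\id))<\infty$ in the definition of $\mathcal{E}$-recurrence (unwinding $\varphi(\alpha_o(e)) = \phi_o(T_o e) = \phi_o(\mathcal{E}(e\otimes\id))$ and noting this is strictly positive whenever $\phi_o$ does not annihilate the nonzero positive element $\mathcal{E}(e\otimes\id)$; this is automatic if one takes $\phi_o$ to be the normalized trace, and in any case the positive-element is nonzero).

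For the converse, assume $\phi_o$ is faithful and $e$ is $\varphi$-recurrent. By part (ii), $\varphi(e\otimes\tau_{e;r;\infty}) = 0$ for every ray $r$; rewriting this as $\phi_o\big(\mathcal{E}(e\otimes E_{o]}(\tau_{e;r;\infty}))\big) = 0$ and invoking faithfulness of $\phi_o$ — the argument $\mathcal{E}(e\otimes E_{o]}(\tau_{e;r;\infty}))$ is a positive element of $\mathcal{A}_o$ because $\mathcal{E}$ is completely positive and $\tau_{e;r;\infty}$ is a projection — we conclude $\mathcal{E}(e\otimes E_{o]}(\tau_{e;r;\infty})) = 0$ for every ray, which by part (i) is $\mathcal{E}$-recurrence of $e$. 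The nondegeneracy $0 < \mathrm{Tr}(\mathcal{E}(e\otimes\id))<\infty$ is inherited from the hypothesis that $e$ is a (nonzero) projection in the relevant finite-dimensional or trace-class setting, together with faithfulness ensuring the trace is strictly positive on the nonzero positive element $\mathcal{E}(e\otimes\id)$.

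The main obstacle I expect is the bookkeeping in the first paragraph: verifying cleanly that applying $E_{o]}$ to the element $e\otimes\tau_{e;r;\infty}$ (which is supported on the sites $o, u_1, u_2,\dots$ along the ray, not on a full level) really does reduce to the single expression $\mathcal{E}(e\otimes E_{o]}(\tau_{e;r;\infty}))$. This requires using the iterated formula \eqref{Eo}, the factorization $\mathcal{E}_{W_n} = \bigotimes_{u\in W_n}\mathcal{E}_u$, and the fact that $\mathcal{E}_u(\id\otimes\cdots\otimes\id) = \id$ for all the off-ray vertices, so that only the one-dimensional chain of transition expectations along $r$ survives — precisely the computation already performed in the proof of the theorem for $E_{o]}(\alpha_0(e)\alpha_{x_m}(f))$, taken in the limit $m\to\infty$. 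Once that identification is in hand, the positivity/faithfulness arguments are routine.
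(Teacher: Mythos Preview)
Your proposal is correct and follows essentially the same route as the paper: both arguments reduce to the identity
\[
\varphi(e\otimes\tau_{e;r;\infty}) \;=\; \phi_o\big(\mathcal{E}(e\otimes E_{o]}(\tau_{e;r;\infty}))\big),
\]
obtained from \eqref{phi=phiooEo} together with the one-step factorization $E_{o]}(a_o\otimes\alpha_\ell(a)) = \mathcal{E}(a_o\otimes E_{o]}(a))$, after which the forward implication is immediate and the converse follows from faithfulness of $\phi_o$ applied to the positive element $\mathcal{E}(e\otimes E_{o]}(\tau_{e;r;\infty}))$. Your identification of the ``bookkeeping'' step as the only real content is exactly right, and the paper handles it by quoting that factorization directly rather than unwinding \eqref{Eo} level by level; you are also slightly more careful than the paper in flagging the nondegeneracy conditions $\varphi(\alpha_o(e))\ne 0$ and $0<\mathrm{Tr}(\mathcal{E}(e\otimes\id))<\infty$, which the paper's proof does not address.
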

\begin{proof}
Let $e\in Proj(\mathcal{A})$ be a projection. For each $\ell\in\{1,\dots, k\}$, one has
$$
E_{o]}(a_o\otimes \tau_{\ell}(a)) = \mathcal{E}(a_o\otimes E_{o]}(a)); \qquad  \forall a_o\in\mathcal{A }_o, \forall a\in\mathcal{A}_V
$$
Then
 \begin{eqnarray*}
 \varphi(e\otimes\tau_{e;r;\infty})) &=& \varphi(\alpha_o(e)\otimes\alpha_{(x_1)}(\tau_{e;r;\infty}))\\
 &\overset{(\ref{phi=phiooEo})}{=}& \phi_o\Big(E_{o]}\Big((\alpha_o(e)\otimes\alpha_{(x_1)}(\tau_{e;r;\infty})\Big)\Big)\\
 &=& \phi_o\Big( \mathcal{E}\Big(e\otimes{E}_{o]}\Big(\tau_{e;r;\infty}\Big)\Big)\Big)\\
 \end{eqnarray*}
 Therefore, if  $\mathcal{E}\Big(e\otimes{E}_{o]}\Big(\tau_{e;r;\infty}\Big)\Big) =0$ then $ \varphi(e\otimes\tau_{e;r;\infty}))=0$. This shows the first implication.

 If the initial state $\phi_o$ is faithful, since $\mathcal{E}\Big(e\otimes{E}_{o]}\Big(\tau_{e;r;\infty}\Big)\Big)\ge 0$ then from the above computation, we have
 $$
 \varphi(e\otimes\tau_{e;r;\infty})) =0 \Rightarrow \mathcal{E}\Big(e\otimes{E}_{o]}\Big(\tau_{e;r;\infty}\Big)\Big) = 0
 $$
 This shows the converse direction, and finishes the proof.
\end{proof}

 \section{Recurrence of QMC associated with OQRW}\label{QMC_tree}

  Let $\mathcal{H}$ and $\mathcal{K}$ be given two separable Hilbert spaces over the complex field $\mathbb{C}$. Let  $\{ |i\rangle\}_{i\in \Lambda}$ be an  ortho-normal basis of $\mathcal{K}$ indexed by a graph $\Lambda$ with almost-countable vertex set. The algebra of observable at a site $x\in V$ is considered to be  $\mathcal{A}_x = \mathcal{A}:=\mathcal{B}(\mathcal{H}\otimes\mathcal{K})$.

Let $\mathcal{M}$ be a OQRW given by \eqref{MM1}.
In the language of OQRW \cite{attal} the Hilbert space $\mathcal{H}$ describes the internal degree of freedom of the quantum walker, while $\mathcal{K}$ describes the state space of the dynamics where the walk is dome through the oriented graph $\Lambda$.
The transition of the walker from a site $j$ to site $i$ is described by a bounded operator  $B_{j}^{i}\in \mathcal{B}(\mathcal{H})$  such that
 \begin{equation}\label{sumBB=1}
 \sum_{i\in\Lambda}B_j^{i*}B_{j}^{i} = \id_{\mathcal{B}(\mathcal{H})}.
 \end{equation}
The initial density matrix of the dynamics is  $\rho\in \mathcal{B}(\mathcal{H}\otimes\mathcal{K})$, of the form
 $$
 \rho = \sum_{i\in\Lambda}\rho_i\otimes |i\rangle\langle  i|; \quad \rho_i\in\mathcal{B}(\mathcal{H})^{+}.
 $$
In what follows, for the sake of simplicity of calculations,  we assume that $\r_i\neq 0$ for all $i\in\Lambda$ (see \cite[Remark 4.5]{DM19} for other kind of initial states).

Let
 \begin{equation}\label{Mij}
 M_j^i = B_j^i\otimes|i\rangle\langle j| \in\mathcal{B}(\mathcal{H}\otimes \mathcal{K}).
 \end{equation}
 and
 \begin{equation}\label{Aij}
 A_{j}^{i} := \frac{1}{{\tr(\rho_j)}^{1/2}}\rho_j^{1/2}\otimes |i\rangle\langle j|, \quad i,j\in \Lambda.
 \end{equation}
 For each $u\in V$, we define
\begin{equation}\label{Kji}
 {K_{j}^{i}}^{(u,S(u))} := {M_{j}^{i*}}^{(u)}\otimes\bigotimes_{v\in S(u)}{A_{j}^{i}}^{(v)} \in\mathcal{A}_{\{u\}\cup S(u)}.
\end{equation}
The interaction of a vertex $u\in V$ with its set of direct successors it describled by
\begin{equation*}
    K^{(u,S(u))} = \sum_{i,j}{K_{j}^{i}}^{(u,S(u))} \in\mathcal{A}_{\{u\}\cup S(u)}
\end{equation*}
Put
\begin{equation}\label{Eu_def}
\mathcal{E}_u(a)
:= \tr_{u]}( K^{(u,S(u))}a K^{(u,S(u))*}); \quad a\in\mathcal{A}_{\{u\}\cup S(u)}.
\end{equation}
For each $j,j'\in\Lambda$ we set
\begin{equation}\label{phijj'}
\varphi_{jj'}(b):= \frac{1}{\mathrm{Tr}(\rho_j)^{1/2}\tr(\rho_{j'})^{1/2}}\tr\left(\rho_j^{1/2} \rho_{j'}^{1/2} \otimes |j'\rangle\langle j|\, b\right); \quad \forall a\in\mathcal{A}
\end{equation}
One can see that $\varphi_{jj'}$ is a linear functional on $\mathcal{A}$. If $j=j'$,  we denote it simply denote $\varphi_j$ instead of $\varphi_{jj}$ one has
\begin{equation}\label{phij}
    \varphi_{j}(a)  = \frac{1}{\tr(\rho_j)}\tr\Big(\rho_j\otimes|j\rangle\langle j| a\Big)
\end{equation}
The functional $\varphi_j$ is then, a state on $\mathcal{A}$.
\begin{theorem}\label{thmEuTuPu}
In the above notations, the map $\mathcal{E}_{u}$ defines a Markov transition expectation from $\mathcal{A}_{\{u\}\cup S(u)}$ into $\mathcal{A}_u$ and
\begin{equation}\label{Eu}
    \mathcal{E}_{u}(a_{u}\otimes a_{(u,1)}\otimes\cdots\otimes a_{(u,k)}) =   \sum_{(i,j ,j')\in \Lambda^3} M_j^{i*} a_{(u)}M_{j'}^{i}\prod_{\ell=1}^{k} \varphi_{jj'}(a_{_{(u,\ell)}})
\end{equation}
Moreover, the backward Markov operators associated with $\mathcal{E}_u$ are given by
\begin{equation}
    P_{u}^{(u,\ell)}(a_{(u,\ell)}) = \sum_{j}\Big(\id_{\mathcal{B}(\mathcal{H})}\otimes|j\rangle\langle j|\Big)\varphi_{j}(a_{(u,\ell)})
\end{equation}
The forward Markov operator associated with $\mathcal{E}_u$ is given by
\begin{equation}
 T_{u}(a_u) = \sum_{ij}M_{j}^{i,*}a_u M_{j}^{i}
\end{equation}
where $a_u\in\mathcal{A}$ and $a_{(u,\ell)}\in\mathcal{A}_{(u,\ell)}$ for each $\ell \in\{1,\cdots, k\}.$
\end{theorem}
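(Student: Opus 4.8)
The plan is a direct computation built on the defining formula $\mathcal{E}_u(a)=\tr_{u]}\big(K^{(u,S(u))}\,a\,K^{(u,S(u))*}\big)$, where $\tr_{u]}$ is the partial trace that retains the factor $\mathcal{A}_u$ and traces out the successor factors $\mathcal{A}_{(u,1)},\dots,\mathcal{A}_{(u,k)}$, and $K^{(u,S(u))}=\sum_{i,j}{M_j^{i*}}^{(u)}\otimes\bigotimes_{v\in S(u)}{A_j^i}^{(v)}$. First I would evaluate $\mathcal{E}_u$ on elementary tensors $a=a_u\otimes a_{(u,1)}\otimes\cdots\otimes a_{(u,k)}$. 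Since $K^{(u,S(u))}$ is itself a sum of elementary tensors (strongly convergent when $\Lambda$ is infinite), the product $K^{(u,S(u))}\,a\,K^{(u,S(u))*}$ decomposes as $\sum_{(i,j),(i',j')}\big(M_j^{i*}a_uM_{j'}^{i'}\big)^{(u)}\otimes\bigotimes_{\ell=1}^{k}\big(A_j^i a_{(u,\ell)}A_{j'}^{i'*}\big)^{(u,\ell)}$; applying $\tr_{u]}$ then gives $\mathcal{E}_u(a)=\sum_{(i,j),(i',j')}M_j^{i*}a_uM_{j'}^{i'}\prod_{\ell=1}^{k}\tr\big(A_j^i a_{(u,\ell)}A_{j'}^{i'*}\big)$.

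The computational core is the identity $\tr\big(A_j^i\,b\,A_{j'}^{i'*}\big)=\delta_{ii'}\,\varphi_{jj'}(b)$ for $b\in\mathcal{A}$, which I would obtain by expanding $A_j^i=\tr(\rho_j)^{-1/2}\rho_j^{1/2}\otimes|i\rangle\langle j|$ and $A_{j'}^{i'*}=\tr(\rho_{j'})^{-1/2}\rho_{j'}^{1/2}\otimes|j'\rangle\langle i'|$, using cyclicity of the trace, and reading off the definition of $\varphi_{jj'}$; the Kronecker symbol $\delta_{ii'}$ issues from the basis overlap $\langle i'|i\rangle$. Because $\prod_{\ell=1}^{k}\delta_{ii'}=\delta_{ii'}$, substituting back collapses the summation index $i'$ onto $i$ and yields $\mathcal{E}_u(a_u\otimes a_{(u,1)}\otimes\cdots\otimes a_{(u,k)})=\sum_{(i,j,j')}M_j^{i*}a_uM_{j'}^i\prod_{\ell=1}^{k}\varphi_{jj'}(a_{(u,\ell)})$, i.e. \eqref{Eu}; the general case follows by linearity and norm-continuity of all maps involved.

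Next I would check that $\mathcal{E}_u$ is a Markov transition expectation into $\mathcal{A}_u$. Its range lies in $\mathcal{A}_u$ by construction, and complete positivity is immediate since $\mathcal{E}_u$ is the composition of $x\mapsto K^{(u,S(u))}xK^{(u,S(u))*}$ with the partial trace $\tr_{u]}$, both completely positive. For unitality I would put $a_u=\id$ and all $a_{(u,\ell)}=\id$ in \eqref{Eu} and use $\varphi_{jj'}(\id)=\delta_{jj'}$ (which follows from $\langle j|j'\rangle$ and $\tr(\rho_j)^{1/2}\tr(\rho_{j'})^{1/2}=\tr(\rho_j)$ at $j=j'$): this forces $j=j'$ and leaves $\sum_{i,j}M_j^{i*}M_j^i=\sum_j\big(\sum_i B_j^{i*}B_j^i\big)\otimes|j\rangle\langle j|$, which is $\id_{\mathcal A}$ by the OQRW constraint \eqref{sumBB=1} and $\sum_j|j\rangle\langle j|=\id_{\mathcal K}$.

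Finally the formulas for $T_u$ and $P_u^{(u,\ell)}$ come out of \eqref{Eu} by specialization. For $T_u(a)=\mathcal{E}_u(a\otimes\id_{S(u)})$ all $k$ scalars are $\varphi_{jj'}(\id)=\delta_{jj'}$, collapsing the sum to $j=j'$ and giving $T_u(a)=\sum_{i,j}M_j^{i*}aM_j^i$. For $P_u^{(u,\ell)}(a)=\mathcal{E}_u\big(\id^{(u)}\otimes a\otimes\id_{S(u)\setminus\{(u,\ell)\}}\big)$ exactly $k-1\ge 1$ of the scalars equal $\varphi_{jj'}(\id)=\delta_{jj'}$, again forcing $j=j'$, so that $P_u^{(u,\ell)}(a)=\sum_{i,j}M_j^{i*}M_j^i\,\varphi_j(a)=\sum_j\big(\id_{\mathcal B(\mathcal H)}\otimes|j\rangle\langle j|\big)\varphi_j(a)$ by \eqref{sumBB=1} once more. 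There is no conceptual obstacle; the steps requiring attention are the clean emergence of the Kronecker deltas from the basis overlaps (both $\langle i'|i\rangle$ on the successor factors and $\langle j|j'\rangle$ inside $\varphi_{jj'}(\id)$) and the collapse of their $k$-fold products, and — when $\Lambda$ is infinite-dimensional — the interchange of the strongly convergent defining sums with the partial trace, which I would justify most carefully.
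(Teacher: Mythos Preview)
Your proposal is correct and follows essentially the same route as the paper: expand $K^{(u,S(u))}aK^{(u,S(u))*}$ on elementary tensors, apply the partial trace, reduce each successor factor via $\tr(A_j^i\,b\,A_{j'}^{i'*})=\delta_{ii'}\varphi_{jj'}(b)$, and then specialize to get unitality, $T_u$, and $P_u^{(u,\ell)}$ using $\varphi_{jj'}(\id)=\delta_{jj'}$ together with $\sum_iB_j^{i*}B_j^i=\id$. Your treatment is, if anything, slightly more explicit than the paper's (you spell out the CP argument and the infinite-$\Lambda$ interchange), and your caveat ``$k-1\ge1$'' for the $P_u^{(u,\ell)}$ collapse matches the implicit assumption in the paper's own computation.
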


\begin{proof}  The map $\mathcal{E}_u$  (\ref{Eu_def}), is clearly completely positive. \\
Let $a =a_{u}\otimes a_{(u,1)}\otimes\cdots\otimes a_{(u,k)}$. Taking into account (\ref{Kji}) and  (\ref{Mij}) one gets
\begin{eqnarray*}
\mathcal{E}_{u}(a)
&=&  \tr_{u]}\left( \left(\sum_{(i,j) \in\Lambda^2}{K_{j}^{i}}\right) a \left(\sum_{  (i,j)\in\Lambda^2}{ K_{j}^{i}}\right)^*\right)\\
&=& \sum_{(i,j), (i',j')\in\Lambda^2} \tr_{u]}\left( {K_{j}^{i}}^{(u,S(u))} a_{u}\otimes a_{(u,1)}\cdots\otimes a_{(u,k)}{ K_{j'}^{i'}}^{(u,S(u))\; *}\right)\\
 &= &
 \sum_{(i,j),(i',j')\in \Lambda^2}\tr_{u]}\left( {M_{j}^{i}}^{(u)\, *} a_u{ M_{j'}^{i'}}^{(u)}\otimes\bigotimes_{\ell=1}^{k}\left(A_{j}^{i}a_{(u,\ell)}A_{j'}^{i'*}\right)^{(u,\ell)}\right)\\
&=&\sum_{(i,j),(i',j')\in \Lambda^2} M_j^{i*}a_0^{(u, 0)}M_{j'}^{i'} \prod_{\ell=1}^{k} \tr( A_{j}^{i}a_{\ell}^{(u,\ell)}A_{j'}^{i'*})\\
\end{eqnarray*}
  For each $\ell\in\{1,\dots, k\}$,  one has
\begin{eqnarray*}
\tr( A_{j}^{i}a_{(u,\ell)}A_{j'}^{i'*}) &\overset{(\ref{Aij})}{=}& \tr_Big( A_{j'}^{i'*}A_{j}^{i}a_{(u,\ell)}\Big)\\
&=& \frac{1}{\tr(\rho_j)^{1/2}\tr(\rho_{j'})^{1/2}}\tr\Big(\rho_{j'}^{1/2}\rho_{j}^{1/2}\otimes |j'\rangle\langle   j|a_{(u,\ell)}\Big)\delta_{i,i'}\\
&\overset{(\ref{phijj'})}{=}& \varphi_{jj'}(a_{(u,\ell)}) \delta_{i,i'}
\end{eqnarray*}
where $\delta_{i,i'}$ denotes the Kronecker symbol. This leads to (\ref{Eu}). One has
$$
\mathcal{E}_{u}(\id_{(u, S(u))}) = \sum_{i,j,j'}M_{j}^{i\;*}M_{j'}^{i}\prod_{\ell=1}^{k}\varphi_{jj'}(\id_{(u,\ell)})\\
\overset{(\ref{phijj'})}{=}\sum_{i,j}M_{j}^{i\, *}M_{j}^{i} = \id_{u}
$$
Then $\mathcal{E}_u$ is a Markov transition expectation.

From (\ref{Pu}) one has
\begin{eqnarray*}
P_{u}^{(u,\ell)}(a_{(u,\ell)} &=& \sum_{i,j,j'} M_{j}^{i\;*}M_{j'}^{i}\varphi_{jj'}(a_{(u,\ell)})\prod_{\underset{\ell'\ne\ell}{\ell'=1}}^{k}\varphi_{jj'}(\id_{(u,\ell')})\\
&=& \sum_{i,j}M_{j}^{i\;*}M_{j}^{i}\varphi_{j}(a_{(u,\ell)})\\
&=& \sum_{j}\Big(\sum_{i}B_{j}^{i\;*}B_{j}^{i}\Big)\varphi_{j}(a_{(u,\ell)})\\
&\overset{(\ref{sumBB=1})}{=}& \sum_{j}\Big(\id_{\mathcal{H}}\otimes|j\rangle\langle j|\Big)\varphi_{j}(a_{(u,\ell)})
\end{eqnarray*}
The forward Markov operator (\ref{Tu}) associated with $\mathcal{E}_u$ satisfies
$$
T_u(a_u)= \sum_{i,j,j'} M_{j}^{i\, *}a_uM_{j'}^{i}\prod_{\ell=1}^{k}\varphi_{jj'}(\id_{(u,\ell)})  = \sum_{i,j} M_{j}^{i\, *}a_uM_{j}^{i}
$$
This finishes the proof.
\end{proof}
Now we are ready to Build the conditional expectation $E_{o]}$  in the case of open quantum random walks using the transition expectations of the form (\ref{Eu_def}) and the quantum Markov chain $\varphi\equiv (\phi_o, \mathcal{E})$, where
\begin{equation}\label{calE}
\mathcal{E}(a):= \mathcal{E}_o(a) = \sum_{i,j}M_{j} ^{i\, *}a_oM_{j}^{i}\prod_{\ell=1}^{k}\varphi_j(a_{(o,\ell)})
\end{equation}
for each $ a= a_{o}\otimes a_{(o,1)}\otimes\cdots\otimes a_{(o,k)}$.\\
 It is clear that for each $u\in V$ the transition expectation $\mathcal{E}_u$ is a copy of $\mathcal{E}$ in the sense of (\ref{Eu=alphau(E)}).
\begin{theorem}\label{thmEoOQRW}
 In the above notations, the conditional expectation associated $E_{o]}$ associated with $\mathcal{E}$ through (\ref{Eo}) has the following expression
 \begin{equation}\label{Eo_expr}
     E_{o]}(a) = \sum_{j}\mathcal{M}_{j}(a_o)\prod_{u\in\Lambda_{[1,n]}}\psi_{j}(a_u)
 \end{equation}
 where
 \begin{equation}\label{psi_j}
  \psi_{j}(b) = \frac{1}{\tr(\rho_j)}\sum_{i\in\Lambda}\tr\left(B_{j}^{i}\rho_j{B_{j}^{i}}^{*}\otimes |i\rangle\langle i|b\right), \quad  \forall b\in\mathcal{A}.
\end{equation}
and $a= \bigotimes_{u\in\Lambda_n}a_u\in\mathcal{A}_{\Lambda_n}$. Moreover, for any initial state $\phi_o = \tr(\omega_o \cdot  )$ the tree-homogeneous quantum Markov chain $\varphi\equiv (\phi_o, \mathcal{E})$ is given by
\begin{equation}\label{varphi_id}
    \varphi(a) = \sum_{j} \tr\left(\omega_o)\mathcal{M}_{j}( a_o) \right)\prod_{u\in \Lambda_{[1,n]}}\psi_{j}(a_{u})
\end{equation}
where
\begin{equation}\label{Mj}
   \mathcal{M}_{j}( \cdot ) = \sum_{i\in\Lambda} M_{j}^{i*} \, \cdot\,M_j^{i}
\end{equation}
\end{theorem}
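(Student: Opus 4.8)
The plan is to compute $E_{o]}$ iteratively using the recursive formula \eqref{Eo} together with the explicit form of $\mathcal{E}_u$ obtained in Theorem \ref{thmEuTuPu}, and then obtain \eqref{varphi_id} by composing with the initial state via \eqref{phi=phiooEo}. First I would fix $a=\bigotimes_{u\in\Lambda_n}a_u$ and work from the leaves inward. At the outermost level one applies $\mathcal{E}_{W_n}=\bigotimes_{u\in W_n}\mathcal{E}_u$ to $a_{W_n}\otimes\id$; by \eqref{Eu} each factor $\mathcal{E}_u(a_u\otimes\id_{S(u)})=T_u(a_u)=\sum_{ij}M_j^{i*}a_uM_j^i$, so the top layer contributes a tensor product over $W_{n-1}$ of operators $T_{x}(a_x)$ sitting at the parent sites. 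The key observation is that when this is fed into the next application of $\mathcal{E}_{W_{n-1}}$, the diagonal structure of $\varphi_{jj'}$ forces $j=j'$ on the successor slots and collapses the double sum; concretely $\varphi_j\big(T_x(a_x)\big)$ must reorganize into the scalar $\psi_j(a_x)$ defined in \eqref{psi_j}. I would verify this collapse in one clean lemma-style computation: show that for any $b\in\mathcal{A}$, $\varphi_j\big(\sum_{i'}M_{j'}^{i'*}bM_{j'}^{i'}\big)=\delta_{jj'}\,\psi_j(b)$ — this is just unwinding \eqref{phij}, \eqref{Mij}, and \eqref{psi_j}, using $M_{j'}^{i'}=B_{j'}^{i'}\otimes|i'\rangle\langle j'|$ and the fact that $\langle j|i'\rangle\langle i'|j'\rangle=\delta_{ji'}\delta_{i'j'}$.

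Once that collapse is established, the induction is essentially bookkeeping: at each step downward from $W_m$ to $W_{m-1}$ one application of $\mathcal{E}_{W_m}$ turns the accumulated scalar weights $\prod_{u\in\Lambda_{[m,n]}}\psi_j(a_u)$ (already carrying a fixed index $j$ from the previous collapse) together with the operator part, producing the same shape with $m$ decremented, since $\mathcal{E}_u(a_u\otimes X)$ with $X$ already carrying the $\psi_j$-weighted structure contributes a factor $\varphi_j(a_u)\cdot(\text{stuff})$ and $\varphi_j(a_u)$... wait — more precisely, at interior sites the slot receiving $a_u$ is the \emph{center} slot of $\mathcal{E}_u$, not a successor slot, so one gets $M_j^{i*}a_uM_{j'}^i$ on that slot times $\prod_\ell\varphi_{jj'}$ of the successor contributions; the diagonal collapse from the successor layer forces $j=j'$ and yields the operator $\mathcal{M}_j(a_u)$-type term, except the final answer \eqref{Eo_expr} shows $\mathcal{M}_j$ is applied only at the root. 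I would therefore be careful to distinguish the root step: at $u=o$ the center slot holds $a_o$ but there is no further $\mathcal{E}$ above it to collapse $j$ against, so the sum over $i$ survives as $\mathcal{M}_j(a_o)=\sum_i M_j^{i*}a_oM_j^i$, while the residual single index $j$ remains summed over at the outside — matching \eqref{Eo_expr} exactly.

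Finally, \eqref{varphi_id} follows immediately: by \eqref{phi=phiooEo}, $\varphi(a)=\phi_o\big(E_{o]}(a)\big)=\tr\big(\omega_o E_{o]}(a)\big)$, and since $E_{o]}(a)=\sum_j\mathcal{M}_j(a_o)\prod_{u\in\Lambda_{[1,n]}}\psi_j(a_u)$ with the products being scalars, linearity of the trace gives $\varphi(a)=\sum_j\tr\big(\omega_o\mathcal{M}_j(a_o)\big)\prod_{u\in\Lambda_{[1,n]}}\psi_j(a_u)$. The main obstacle I anticipate is not any single estimate but getting the index bookkeeping exactly right: tracking how the pair $(j,j')$ in \eqref{Eu} is forced to the diagonal at every non-root vertex by the $\delta_{jj'}$ hidden in $\varphi_{jj'}\circ T$, while exactly one free index $j$ propagates all the way from the leaves to the root. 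A secondary (minor) point is justifying the interchange of limit and the finite tensor manipulations, i.e. that $a\in\mathcal{A}_{\Lambda_n}$ makes the defining limit in \eqref{lim_Mc} stationary so that \eqref{Eo} applies verbatim with $h_{n+1}=\id$.
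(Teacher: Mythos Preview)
The paper does not supply a proof of this theorem; it is stated and followed only by a remark pointing back to \cite{MSH22}. So there is no line-by-line comparison to make, and your inductive computation from the leaves inward via \eqref{Eo} and \eqref{Eu} is exactly the expected argument.

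Two small points to tighten. First, the collapse identity you need is slightly more general than the one you wrote down: because \eqref{Eu} carries the off-diagonal functional $\varphi_{jj'}$, you should check
\[
\varphi_{jj'}\bigl(\mathcal{M}_{j''}(b)\bigr)=\delta_{jj'}\,\delta_{j'j''}\,\psi_j(b),
\]
which follows from $M_{j''}^{i'}=B_{j''}^{i'}\otimes|i'\rangle\langle j''|$ and the $|j'\rangle\langle j|$ inside \eqref{phijj'} (the relevant pairing is $\langle j|j''\rangle\langle j''|j'\rangle=\delta_{jj''}\delta_{j'j''}$, not $\langle j|i'\rangle\langle i'|j'\rangle$ as you wrote). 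With both deltas in hand, the sum over $(j,j')$ in \eqref{Eu} and the inner sum over $j''$ coming from the layer below collapse simultaneously to a single index $j$. Second, make the inductive hypothesis explicit: after contracting down to level $m$, the expression sitting at each $u\in W_m$ is $\sum_{j_u}\mathcal{M}_{j_u}(a_u)\prod_{v}\psi_{j_u}(a_v)$ with an \emph{independent} summation index $j_u$ for each such $u$; it is the next application of $\mathcal{E}_{W_{m-1}}$ together with the collapse identity that identifies all these $j_u$ with the single $j$ at the parent. Once phrased this way the induction is mechanical, the root step yields \eqref{Eo_expr}, and \eqref{varphi_id} follows from \eqref{phi=phiooEo} exactly as you say.
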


\begin{remark}
We notice that in our previous work \cite{MSH22} the expression (\ref{varphi_id})  defines the QMC associated with the disordered phase of the system  that deals with phase transitions for QMC on trees associated with OQRW.
 \end{remark}
\begin{theorem}\label{thmErecurrent} In the notations of Theorem \ref{thmEoOQRW}, if $e$ is a projection in $\mathcal{A}$ such that
\begin{equation}
p:= \sup_{j\in\Lambda}\psi_j(e^{\perp}) < 1
\end{equation}\label{suppsi}
then  $e$ is $\mathcal{E}$-recurrent.
\end{theorem}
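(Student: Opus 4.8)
The plan is to invoke the characterization of $\mathcal{E}$-recurrence obtained above: by \eqref{E_recurrence_eq}, the projection $e$ is $\mathcal{E}$-recurrent precisely when $\mathcal{E}\big(e\otimes E_{o]}(\tau_{e;r;\infty})\big)=0$ for every ray $r=(x_n)_n$. I will in fact establish the stronger fact that $E_{o]}(\tau_{e;r;\infty})=0$ for every ray, whence the above vanishes trivially. Since $E_{o]}(\tau_{e;r;\infty})$ is by definition the limit $\lim_{n\to\infty}E_{o]}(\tau_{e;x_n;\infty})$, the task reduces to estimating the finite-level quantities $E_{o]}(\tau_{e;x_n;\infty})$ and showing they tend to $0$.

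First I would unwind the definition \eqref{tauxinfty}: for a fixed ray $r=(x_n)_n$, the element $\tau_{e;x_n;\infty}$ is a product element of $\mathcal{A}_{\Lambda_n}$ carrying $e^\perp$ at the root $o$ and at the $n$ vertices $x_1,\dots,x_n$ on the ray, and the identity everywhere else in $\Lambda_n$. Feeding this into the explicit formula \eqref{Eo_expr} for $E_{o]}$ provided by Theorem \ref{thmEoOQRW}, the root contributes the factor $\mathcal{M}_j(e^\perp)$, while among the factors $\psi_j(a_u)$ with $u\in\Lambda_{[1,n]}$ exactly the $n$ ray-vertices $x_1,\dots,x_n$ contribute $\psi_j(e^\perp)$ and all the others contribute $\psi_j(\id)=1$ --- the normalization $\psi_j(\id)=1$ being immediate from \eqref{psi_j} and \eqref{sumBB=1}. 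This yields
\begin{equation*}
E_{o]}(\tau_{e;x_n;\infty}) = \sum_{j\in\Lambda}\mathcal{M}_j(e^\perp)\,\psi_j(e^\perp)^{\,n}.
\end{equation*}

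Then comes the estimate, which is where the hypothesis $p=\sup_j\psi_j(e^\perp)<1$ enters. From \eqref{psi_j} each $\psi_j$ is a positive functional, so $0\le\psi_j(e^\perp)^{\,n}\le p^n$ for every $j$; and from \eqref{Mj} together with \eqref{sumMij=1} one has $0\le\sum_{j}\mathcal{M}_j(e^\perp)=\sum_{i,j}M_j^{i*}e^\perp M_j^i\le\sum_{i,j}M_j^{i*}M_j^i=\id$. Combining these two bounds gives $0\le E_{o]}(\tau_{e;x_n;\infty})\le p^n\,\id$, and letting $n\to\infty$ (using $p<1$) forces $E_{o]}(\tau_{e;r;\infty})=0$. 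Hence $\mathcal{E}\big(e\otimes E_{o]}(\tau_{e;r;\infty})\big)=0$ for every ray, so $e$ is $\mathcal{E}$-recurrent by \eqref{E_recurrence_eq}.

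The argument is mostly bookkeeping; the points that need genuine care --- and thus the only real obstacles --- are: correctly identifying, in the evaluation of \eqref{Eo_expr} on $\tau_{e;x_n;\infty}$, which tensor factors carry $e^\perp$ and which carry $\id$, so that the product collapses to the clean geometric factor $\psi_j(e^\perp)^{\,n}$; verifying positivity of the $\psi_j$'s, the normalization $\psi_j(\id)=1$, and the sub-unitality $\sum_j\mathcal{M}_j(e^\perp)\le\id$; and, when $\mathcal{H}\otimes\mathcal{K}$ is infinite-dimensional, retaining the requirement $0<\mathrm{Tr}(\mathcal{E}(e\otimes\id))<\infty$ from the definition of $\mathcal{E}$-recurrence as a standing assumption on $e$.
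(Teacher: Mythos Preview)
Your proposal is correct and follows essentially the same route as the paper: compute $E_{o]}(\tau_{e;x_n;\infty})$ via the explicit formula \eqref{Eo_expr} to obtain $\sum_j\mathcal{M}_j(e^\perp)\psi_j(e^\perp)^n$, bound this above by $p^n\id$ using $\psi_j(e^\perp)\le p$ and $\sum_j\mathcal{M}_j(e^\perp)\le\id$, and conclude via \eqref{E_recurrence_eq}. Your write-up is in fact more careful than the paper's own proof about the bookkeeping (the normalization $\psi_j(\id)=1$, the sub-unitality bound, and the trace finiteness hypothesis in the infinite-dimensional case).
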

\begin{proof}
Let $r= (x_n)_n$ be a ray one the semi-infinite Cayley tree.
One has
\begin{eqnarray*}
E_{o]}(\tau_{e;x_n;\infty}) &\overset{(\ref{Eo_expr})}{=}& \sum_{j\in\Lambda}M_{j}^{i\,*} \alpha_o(e^{\perp}) M_{j}^{i}\prod_{m=1}^{n}\psi_{j}(\alpha_{x_m}(e^\perp))\\
&\overset{(\ref{psi_j})}{=}& \sum_{j\in\Lambda}M_{j}^{i\,*} e^{\perp} M_{j}^{i}\left(\psi_{j}( e^\perp)\right)^n\\
&\leq& \sum_{j}M_{j}^{i\, *}M_{j}^{i}\,  p^n\\
&=&   p^n
\end{eqnarray*}
From (\ref{suppsi})  one gets
$$
0 \le E_{o]}(\tau_{e;r;\infty}) = \lim_{n\to\infty}E_{o]}(\tau_{e;x_n;\infty})  = 0
$$
Thus $E_{o]}(\tau_{e;r;\infty}) =0$ and by (\ref{E_recurrence_eq}) the projection $e$ is $\mathcal{E}$-recurrent.
\end{proof}
   \section{Examples}\label{Sect_exp}
In this section, we are going to illustrate the obtained results on recurrence for quantum Markov chains associated with OQRW.

Let  $\mathcal{H}= \mathcal{K} = \mathbb{C}^2$.  The algebra of observable at a site $u$ is then  $\mathcal{A}_u = \mathcal{B}(\mathcal{H}\otimes\mathcal{H}) \equiv M_4(\mathbb{C})$. Let $\Lambda = \{1,2\}$. The interactions are given by
  \begin{equation}\label{Bjimodel}
     B_1^1 = \left(
               \begin{array}{cc}
                 a & 0 \\
                 0 & b \\
               \end{array}
             \right), \quad  B_2^1 = \left(
               \begin{array}{cc}
                 0 & 1 \\
                 0 & 0 \\
               \end{array}
             \right), \quad  B_1^2 = \left(
               \begin{array}{cc}
                 c & 0 \\
                 0 & d \\
               \end{array}
             \right),\quad  B_2^2 = \left(
               \begin{array}{cc}
                 1 & 0 \\
                 0 & 0 \\
               \end{array}
             \right)
   \end{equation}
   where  $|a|^2 + |c|^2 = |b|^2 + |d|^2  =1, ac\ne 0$.
   Put \begin{equation}\label{pq}
p=\left(
      \begin{array}{cc}
        1 & 0 \\
       0 & 0 \\
      \end{array}
    \right), \qquad q=\left(
      \begin{array}{cc}
        0 & 0 \\
       0 & 1 \\
      \end{array}
    \right).
\end{equation}
and
$$|1\rangle = \left[\begin{array}{cc} 1 \\ 0  \end{array}\right] , |2\rangle  = \left[\begin{array}{ll}0 \\ 1  \end{array}\right] $$
 Notice that  $(|1\rangle,|2\rangle)$ is an ortho-normal basis of $\mathcal{K}\equiv \mathbb{C}^2$. In  the sequel elements of $\mathcal{B}(\mathcal{H})$ will be denoted by means of $2\times2$ complex matrices, while elements of $\mathcal{B}(\mathcal{K})$ will be written using Dirac notation $|i\rangle\langle j|.$

Recall that (c.f. \cite{}) any rank-1 projection in $\mathbb{M}_2(\mathbb{C})$ has the form
\begin{equation}\label{ez}
 p(\varepsilon,z) = \left(\begin{array}{cc}  \varepsilon & z\sqrt{\varepsilon(1-\varepsilon)}\\
 \\
  \overline{z} \sqrt{\varepsilon(1-\varepsilon)}   & 1-\varepsilon \\
 \end{array}\right)
 \end{equation}
 where $\varepsilon\in[0,1], z\in \mathbb{C}$ with $|z| =1$.
Then we consider the projection on $\mathcal{A}$ having the following form
$$
e(\varepsilon, z, \xi) = p(\varepsilon, z)\otimes |\xi\rangle\langle\xi|
$$
where $$|\xi\rangle:= \sum_{i\in\Lambda}\xi_i|i\rangle\in\mathcal{K}$$  being a unit vector. i.e. $\sum_{i\in\Lambda}|\xi_i|^2 = 1$.
\begin{example}[$\mathcal{E}$-recurrence]
Using (\ref{psi_j}) one compute
\begin{align*}
\psi_j(e(\varepsilon, z, \xi)) =& \frac{1}{\tr(\rho_j)}\sum_{i\in\Lambda}\tr\left( B_{j}^{i} \rho_j B_{j}^{i\; *} p(\varepsilon,z)\right)|\xi_i|^2
\end{align*}

Then, for $\rho_j=\left(
               \begin{array}{cc}
                 1 & 0 \\
                 0 & 0 \\
               \end{array}
             \right)$, one gets
\begin{equation*}
 \tr\left( B_{1}^{1} \rho_j B_{1}^{1\; *} p(\varepsilon, z)\right) =  \varepsilon |a|^2 ,
\end{equation*}
\begin{equation*}
 \tr\left( B_{2}^{1} \rho_2 B_{2}^{1\; *} p(\varepsilon, z)\right) =   0,
\end{equation*}
\begin{equation*}
 \tr\left( B_{1}^{2} \rho_j B_{1}^{2\; *} p(\varepsilon, z)\right) =   \varepsilon |c|^2 ,
\end{equation*}
\begin{equation*}
 \tr\left( B_{2}^{2} \rho_j B_{2}^{2\; *} p(\varepsilon, z)\right) =  \varepsilon .
\end{equation*}

Hence,
\begin{align}\label{psi1(e)}
\psi_1(e(\varepsilon, z, \xi)) =& \sum_{i\in\Lambda}\tr\left( B_{1}^{i} \rho_j B_{1}^{i\; *} p(\varepsilon, z)\right)|\xi_i|^2
=&  \varepsilon |a|^2 |\xi_1|^2.
\end{align}
and
\begin{align}\label{psi2(e)}
\psi_2(e(\varepsilon, z, \xi)) =& \sum_{i\in\Lambda}\tr\left( B_{2}^{i} \rho_j B_{2}^{i\; *} p(\varepsilon, z)\right)|\xi_i|^2
=&  \varepsilon |c|^2 |\xi_1|^2 +\varepsilon |\xi_2|^2 .
\end{align}
Thus, Theorem \ref{thmErecurrent} implies that
  $e(\varepsilon, z, \xi)^{\perp}$ is $\mathcal{E}$-recurrent whenever $\varepsilon <1$.
 If $\varepsilon=|a|=|\xi_1|=1$, the projection $e(\varepsilon, z, \xi)$ becomes
 \begin{equation*}
    e(1, z, \xi) =   \left(
               \begin{array}{cc}
                 1 & 0 \\
                 0 & 0 \\
               \end{array}
             \right)\otimes |1\rangle\langle1|.
 \end{equation*}
 Put $$e := e(1, z, \xi)^{\perp} = \id_{M_2}\otimes |2\rangle\langle2|+\left(
               \begin{array}{cc}
                 0 & 0 \\
                 0 & 1 \\
               \end{array}
             \right)\otimes |1\rangle\langle1|$$
 From (\ref{psi1(e)}) and (\ref{psi2(e)}) one has $\psi_1(e^\perp) = 1$ and $\psi_2(e^{\perp}) = 0.$
 Then, from (\ref{Eo_expr}) one gets
 \begin{align*}
E_{o]}(\tau_{e ;x_n;\infty}) =\mathcal{M}_1(e^{\perp})=& \sum_{i=1}^2M_{1}^{i\,*} e^{\perp} M_{1}^{i}
\\=&{B_1^1}^*\left(
               \begin{array}{cc}
                 1 & 0 \\
                 0 & 0 \\
               \end{array}
             \right)B_1^1\otimes |1\rangle\langle1|
             \\=&e^{\perp}
\end{align*}

 Therefore,
 \begin{align*}
     \mathcal{E}(e\otimes  E_{o]}(\tau_{e; r;\infty})) =&\mathcal{E}(e\otimes  e^{\perp})
     \\=&\mathcal{M}_1(e)\psi_1(e^{\perp})
      \\=&\mathcal{M}_1(e)
      \\=& \left(
               \begin{array}{cc}
                 0 & 0 \\
                 0 & |b|^2 \\
               \end{array}
             \right)\otimes |1\rangle\langle1|+\left(
               \begin{array}{cc}
                 0 & 0 \\
                 0 & |d|^2 \\
               \end{array}
             \right)\otimes |2\rangle\langle2|\ne 0
 \end{align*}
 Thus, from  \eqref{E-recu-eq} the projection $e$ is not $\mathcal{E}$-recurrent. This means that the inequality (\ref{suppsi}) is optimal.
  \end{example}
  \begin{example}[$\mathcal{E}$-accessibility]
Recall that for $\ell =1,2$, the backward Markov operator is given by
$$
 P_{u}^{(u,\ell)}(a_{(u,\ell)}) = \sum_{j=1}^{2}\Big(\id_{\mathcal{B}(\mathcal{H})}\otimes|j\rangle\langle j|\Big)\varphi_{j}(a_{(u,\ell)}).
$$
Recall also that the forward Markov operator is given by
$$
T_{u}(a_u) = \sum_{ij}{}M_{j}^{i,*}a_uM_{j}^{i}.
$$
Then,
\begin{equation*}
 \mathcal{E}\left(e\otimes P_{x_0}^{x_m}T_{x_m}f\right) =\sum_{j}\psi_{j}(f)\mathcal{E}\left(e\otimes\id\otimes|j\rangle\langle j|\right)
\end{equation*}
\begin{itemize}
    \item Take $e\in Proj(\mathcal{A})$ and $f=e(\varepsilon, z, \xi)$, then using \eqref{psi1(e)} and \eqref{psi2(e)}, we obtain
 \begin{align*}
 \mathcal{E}\left(e\otimes P_{x_0}^{x_m}T_{x_m}e(\varepsilon, z, \xi)\right) &=\sum_{j}\psi_{j}(e(\varepsilon, z, \xi))\mathcal{E}\left(e\otimes\id\otimes|j\rangle\langle j|\right)
 \\&=\varepsilon |a|^2 |\xi_1|^2 \mathcal{E}\left(e\otimes\id\otimes|1\rangle\langle 1|\right)  +\varepsilon( |c|^2 |\xi_1|^2 + |\xi_2|^2)\mathcal{E}\left(e\otimes\id\otimes|2\rangle\langle 2|\right)
 \\&=\varepsilon \left[|a|^2 |\xi_1|^2 \mathcal{M}_1(e) +( |c|^2 |\xi_1|^2 + |\xi_2|^2)\mathcal{M}_2(e)\right]
\end{align*}
for any projection $e$. In particular, one easily can see that there is no projection $e$ which is $\mathcal{E}$-accessible from
\begin{equation*}
   e(0, z, \xi)= \left(
               \begin{array}{cc}
                 0 & 0 \\
                 0 & 1 \\
               \end{array}
             \right)\otimes |\xi\rangle\langle\xi|.
\end{equation*}
    \item Now, take
    \begin{equation*}
    f=\sigma^{x_{W_n}(1)} =  \id_{M_2}\otimes |1\rangle\langle1|
\end{equation*}
 where $x_{W_n}(1)$ is defined by (\ref{xw}). Then, we have
 \begin{equation*}
     \psi_{1}(\sigma^{x_{W_n}(1)})=\tr(B_1^1p{B_1^1}^*)=|a|^2\quad{\rm and} \ \psi_{2}(\sigma^{x_{W_n}(1)})=\tr(B_2^1p{B_2^1}^*)=0.
 \end{equation*}
 Hence,
 \begin{align*}
 \mathcal{E}\left(e\otimes P_{x_1}^{x_m}T_{x_m}\sigma^{x_{W_n}(1)}\right) =&|a|^2\mathcal{E}\left(e\otimes (\id\otimes|1\rangle\langle 1|)^{(x_1)}\right)
\\=& |a|^2 \sum_{i}{M_{1}^{i}}^{*}e M_{1}^{i}
\\=&|a|^2\mathcal{M}_1(e).
\end{align*}
In particular, if $|a|>0$, we deduce that
\begin{equation*}
     e_1=\left(
               \begin{array}{cc}
                 1 & 0 \\
                 0 & 0 \\
               \end{array}
             \right)\otimes |1\rangle\langle1|
 \end{equation*}
 is $\mathcal{E}$-accessible from $\sigma^{x_{W_n}(1)}$, since $\mathcal{M}_1(e_1)=e_1$.
\end{itemize}
 \end{example}

 \begin{example}[$\varphi$-accessibility]

 We notice that,
 \begin{equation*}
 \varphi\left(e\otimes P_{x_0}^{x_m}T_{x_m}f\right) =\sum_{j}\psi_{j}(f)\varphi\left(e\otimes\id\otimes|j\rangle\langle j|\right)
 \end{equation*}
 where
 \begin{align*}
     \varphi\left(e\otimes\id\otimes|j\rangle\langle j|\right) &= \sum_{k} \tr\left(\omega_o\mathcal{M}_{k}(e) \right)\psi_{k}(\id\otimes|j\rangle\langle j|)
     \\&= \sum_{k} \tr\left(\omega_o\mathcal{M}_{k}(e) \right)\psi_{k}(\id\otimes|j\rangle\langle j|)
     \\&= \sum_{k} \tr\left(\omega_o\mathcal{M}_{k}(e) \right)\frac{Tr(B_k^j\rho_k{B_k^j}^*)}{\tr(\rho_k)}.
 \end{align*}
 Hence, for $|a|>0$ and
\begin{equation*}
     \omega_0=\left(
               \begin{array}{c|c}
                 (0) & (0)  \\
                 \hline
                  (0) & (*) \\
               \end{array}
             \right)
 \end{equation*}
 we deduce that
\begin{equation*}
     e_1=\left(
               \begin{array}{cc}
                 1 & 0 \\
                 0 & 0 \\
               \end{array}
             \right)\otimes |1\rangle\langle1|
 \end{equation*}
 is not $\varphi$-accessible from $\sigma^{x_{W_n}(1)}$, since
 $$\mathrm{Tr}\left(\omega_o\mathcal{M}_{1}(e_1) \right)=\mathrm{Tr}\left(\omega_oe_1 \right)=0\quad \hbox{and} \quad \mathcal{M}_2(e_1)=0$$
 \end{example}

\section*{Declaration of Competing Interest}

The authors confirm that there are no known conflicts of interest associated with this publication and there has been no significant financial support for this work that could have influenced its outcome.

\section*{Data availability}
The paper does not use any data.

\section*{Acknowledgments}
The authors gratefully acknowledge Qassim University, represented by the Deanship of Scientific Research, on the
financial support for this research under the number (10173-cba-2020-1-3-I)
during the academic year 1442 AH / 2020 AD.

\end{document}